\newcommand{\E}[2]{\mathbb{E} _{ #1 }  \left[ #2 \right]}
\newcommand{\Var}[2]{\mathrm{Var} _{#1} \left( #2 \right)}
\newcommand{\Prob}[2]{\mathbb{P} _{ #1 } \left\{ #2 \right\}}
\newcommand{\Ep}[1]{\mathrm{E}_{p}  \left[ #1 \right]}
\newcommand{\Vp}[1]{\mathrm{V}_{p} \left( #1 \right)}
\newcommand{\floor}[1]{\left\lfloor #1 \right\rfloor}
\newcommand{\norm}[1]{\left\lVert #1 \right\rVert}
\newcommand{\abs}[1]{\left| #1 \right|}
\newcommand{\R}{\mathbb{R}}
\newcommand{\td}{\Tilde}
\newcommand{\KL}[2]{D(#1||#2)}
\newcommand{\brc}[1]{\left( #1 \right)}
\newcommand{\eps}{\epsilon}
\newcommand{\MISE}{\text{MISE}}
\newcommand{\FAR}[1]{\mathrm{FAR} \left( #1 \right)}
\newcommand{\WADD}[1]{\mathrm{WADD} \left( #1 \right)}
\newcommand{\WADDth}[1]{\mathrm{WADD} _{ \theta } \left( #1 \right)}
\DeclareMathOperator*{\esssup}{ess\,sup}
\newtheorem{theorem}{Theorem}[section]
\newtheorem{lemma}[theorem]{Lemma}
\theoremstyle{definition}
\theoremstyle{remark}
\newtheorem*{remark}{Remark}    
\begin{document}

\title{Quickest Change Detection with Leave-one-out Density Estimation}

\author{Yuchen Liang, ~\IEEEmembership{Student Member,~IEEE}, and  Venugopal V. Veeravalli, ~\IEEEmembership{Fellow, ~IEEE}

\thanks{The authors are with the ECE Department
of the University of Illinois at Urbana-Champaign. Email: \{yliang35, vvv\}@ILLINOIS.EDU. 
}
}
\maketitle

\vspace*{-0.5in}

\begin{abstract}
The problem of quickest change detection in a sequence of independent observations is considered. The pre-change distribution is assumed to be known, while the post-change distribution is completely unknown. A window-limited leave-one-out (LOO) CuSum test is developed, which does not assume any knowledge of the post-change distribution, and does not require any post-change training samples. It is shown that, with certain convergence conditions on the density estimator, the LOO-CuSum test is first-order asymptotically optimal, as the false alarm rate goes to zero. The analysis is validated through numerical results, where the LOO-CuSum test is compared with baseline tests that have distributional knowledge.
\end{abstract}

\begin{IEEEkeywords}
Quickest change detection (QCD), non-parametric statistics, (kernel) density estimation.
\end{IEEEkeywords}

\section{Introduction}

The problem of quickest change detection (QCD) is of fundamental importance in mathematical statistics (see, e.g., \cite{vvv_qcd_overview,xie_vvv_qcd_overview} for an overview). Given a sequence of observations whose distribution changes at some unknown change-point, the goal is to detect the change in distribution as quickly as possible after it occurs, while controlling the false alarm rate. 
In classical formulations of the QCD problem, it is assumed that the pre- and post-change distributions are known, and that the observations are independent and identically distributed (i.i.d.) in either the pre-change or the post-change regime. However, in many practical situations, while it is reasonable to assume that we can accurately estimate the pre-change distribution, the post-change distribution is rarely completely known.


There have been extensive efforts to address pre- and/or post-change distributional uncertainty in QCD problems. In the case where both distributions are not fully known, one approach is to assume that they are indexed by a (low-dimensional) parameter that comes from a pre-defined parameter set, and employ a generalized likelihood ratio (GLR) approach to detection -- this was first introduced in \cite{lorden1971} and later analyzed in more detail in \cite{lai1998}.
In particular, in \cite{lai1998}, it is assumed that the pre-change distribution is known and that the post-change distribution comes from a parametric family, with the parameter being finite-dimensional. A window-limited GLR test is proposed, which is shown to be asymptotically optimal under certain smoothness conditions.
This work has recently been extended to non-stationary post-change settings \cite{non_stat_glr_2021}.


Another approach to dealing with distributional uncertainty in QCD problems is the minimax robust approach \cite{huber1965}, where it is assumed that the pre- and post-change distributions come from (known) mutually exclusive uncertainty classes, and the goal is to optimize the performance for the worst-case choice of distributions in the uncertainty classes. Under certain conditions, e.g., joint stochastic boundedness (see, e.g., \cite{moulin-veeravalli-2018} for a definition) and weak stochastic boundedness \cite{Molloy2017}, robust solutions can be found \cite{Unnikrishnan2011, Molloy2017}. However, these robust tests can have suboptimal performance for the actual distributions encountered in practice.

In this paper, we will assume complete knowledge of the pre-change distribution, while not making any parametric assumptions about the post-change distribution. 
There have also been approaches to deal with non-parametric uncertainty in the distributions in QCD problems. One approach is to replace the log-likelihood ratio by some other statistic and formulate the test in the non-parametric setting. Examples of this approach include the use of kernel M-statistics \cite{xie_mstat, kernelcusum}, one-class SVMs \cite{one_svm}, nearest neighbors \cite{chu2018sequential, nn_2019}, and Geometric Entropy Minimization \cite{yilmaz_2017, kurt_2020}. In \cite{xie_mstat}, a test is proposed that compares the kernel maximum mean discrepancy (MMD) within a window to a given threshold. A way to set the threshold is also proposed that meets the false alarm rate asymptotically \cite{xie_mstat}. Another approach is to estimate the log-likelihood ratio and thus the CuSum test statistic through a pre-collected training set. The include direct kernel estimation \cite{sugiyama} and, more recently, neural network estimation \cite{moustakides2019}. 
However, \textit{the tests proposed in \cite{xie_mstat}--\cite{moustakides2019} lack explicit performance guarantees on the detection delay.}
The closest work to ours is \cite{binning}, where a binning approach is proposed to solve the QCD problem asymptotically without any pre-collected training set. The asymptotic optimality is established for the case where the pre-change distribution is known, the post-change distribution is distinguishable from the pre-change with binning, and both distributions have discrete support \cite{binning}.

Our contributions are as follows:
\begin{enumerate}
    \item We propose a window-limited leave-one-out (LOO) CuSum test, which does not assume any knowledge of the post-change distribution, and does not require any post-change training samples.
    \item We provide a way to set the test threshold that asymptotically meets the false alarm constraint.
    \item We show that the proposed LOO-CuSum test is first-order asymptotically optimum, as the false alarm rate goes to zero.
    \item We validate our analysis through numerical results, in which we compare the LOO-CuSum test with baseline tests that have distributional knowledge.
\end{enumerate}

The rest of the paper is structured as follows. In Section~\ref{sec:de_property}, we describe several properties required of the density estimators for asymptotically optimal detection. In Section~\ref{sec:qcd_loo}, we propose the LOO-CuSum test, and analyze its theoretical performance.
In Section~\ref{sec:num_res}, we present numerical results that validate the theoretical analysis.
Finally, we conclude the paper in Section~\ref{sec:concl}.

\section{Leave-one-out (LOO) Density Estimator}
\label{sec:de_property}

Let $X_1,X_2,\dots \in \R^d$ be i.i.d. samples drawn from an unknown distribution $p$. Denote by $\text{supp}(p)$ the support of $p$. Denote by $\mathrm{E}_{p}$ and $\mathrm{V}_{p}$, the expectation and variance operator, respectively, under $p$.
Denote the estimated density as $\widehat{p}^{n,k}_{-i}$, where the subscript $-i$ represents that $X_i$,  with $k \leq i \leq n$, is the sample that is left out.
Note that the density is a function of $X^{n,k}_{-i} := X_k,\dots X_{i-1},X_{i+1},\dots,X_n$, and thus $\widehat{p}^{n,k}_{-i}$ and $X_i$ are independent. The estimation procedure is assumed to be sample-homogeneous, i.e., $\widehat{p}^{n,k}_{-i} \stackrel{d.}{=} \widehat{p}^{n,k}_{-j}, \forall k \leq i < j \leq n$. 

The Kullback-Leibler (KL) divergence between distributions $p$ and $q$ is
\[ \KL{p}{q} := \int_{\text{supp}(p)} \log (p(x)/q(x)) d x. \]

Suppose that, for large enough $n$, there exists constants $0 < \beta_1,C_1,C_2 < \infty$ and $1 < \beta_2 < 2$ (that depend only on the distribution $p$ and the estimation procedure) such that, for each $1 \leq i \leq n+1$, the KL loss \cite{de-with-kl-loss} of the leave-one-out (LOO) estimator satisfies
\begin{equation}
\label{eq:converg_m1_upper_bound}
    \text{KL-loss} (\widehat{p}^{n,k}_{-i}) := \Ep{\KL{p}{\widehat{p}^{n,k}_{-i}}} \leq \frac{C_1}{(n-k)^{\beta_1}}
\end{equation}
where the expectation $\mathrm{E}_p$ is over the randomness of $\widehat{p}^{n,k}_{-i}$.
Also, the total variance of the estimator satisfies
\begin{equation}
\label{eq:converg_m2_bound}
    \Vp{\sum_{i=k}^{n} \log \frac{p(X_i)}{\widehat{p}^{n,k}_{-i}(X_i)}} \leq C_2 (n-k+1)^{\beta_2}.
\end{equation}

One typical loss measure for a density estimator is the mean-integrated squared error (MISE), defined as (see, e.g., \cite[Chap.~2]{mult-denst-est})
\begin{align}
\label{eq:mise_def}
   \MISE(p, \widehat{p}^{n,k}_{-i}) &= \Ep {\int (\widehat{p}^{n,k}_{-i} (x_i) - p(x_i))^2 d x_i} 
   = \Ep {\norm{\widehat{p}^{n,k}_{-i} - p}_2^2},\ \forall k \leq i \leq n.
\end{align}

The following lemma connects the MISE with the bounds in \eqref{eq:converg_m1_upper_bound} and \eqref{eq:converg_m2_bound}.

\begin{lemma}
\label{lem:Dest}
Suppose that there exist $\overline{\zeta}, \underline{\zeta}$ such that 
\begin{equation} \label{eq:compact_support}
    0 < \underline{\zeta} \leq p(x), \widehat{p}^{n,k}_{-i}(x) \leq \overline{\zeta} < \infty,~\forall x \in \text{supp}(p).
\end{equation}
for any $k \leq i \leq n$.
If the estimator achieves
\begin{equation}
    \MISE(p, \widehat{p}^{n,k}_{-i}) \leq \frac{C_3}{(n-k)^{\beta_1}}
\end{equation}
for some constant $C_3 < \infty$, then \eqref{eq:converg_m1_upper_bound} and \eqref{eq:converg_m2_bound} are satisfied with
\[ C_1 = \frac{C_3}{\underline{\zeta}},\quad C_2 = \frac{\overline{\zeta} C_3}{\underline{\zeta}^2}, \quad \beta_2 = 2 - \beta_1. \]
\end{lemma}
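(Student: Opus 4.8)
The plan is to bound both left-hand sides in \eqref{eq:converg_m1_upper_bound} and \eqref{eq:converg_m2_bound} by the squared $L_2$ error $\norm{p-\widehat{p}^{n,k}_{-i}}_2^2$, whose expectation is $\MISE(p,\widehat{p}^{n,k}_{-i})$ by \eqref{eq:mise_def}, and then substitute the assumed MISE rate. For \eqref{eq:converg_m1_upper_bound} I would use the elementary inequality $\log t\le t-1$ (with $t=p(x)/\widehat{p}^{n,k}_{-i}(x)$) to pass from the KL divergence to the $\chi^2$-divergence, and then the lower bound from \eqref{eq:compact_support}:
\[
\KL{p}{\widehat{p}^{n,k}_{-i}}=\int_{\text{supp}(p)} p(x)\log\frac{p(x)}{\widehat{p}^{n,k}_{-i}(x)}\,dx\le\int_{\text{supp}(p)}\frac{\big(p(x)-\widehat{p}^{n,k}_{-i}(x)\big)^2}{\widehat{p}^{n,k}_{-i}(x)}\,dx\le\frac{1}{\underline{\zeta}}\,\norm{p-\widehat{p}^{n,k}_{-i}}_2^2 .
\]
Taking $\Ep{\cdot}$ of both sides then yields $\Ep{\KL{p}{\widehat{p}^{n,k}_{-i}}}\le\underline{\zeta}^{-1}\MISE(p,\widehat{p}^{n,k}_{-i})\le(C_3/\underline{\zeta})(n-k)^{-\beta_1}$, which is \eqref{eq:converg_m1_upper_bound} with $C_1=C_3/\underline{\zeta}$.

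\emph{The variance bound.} Here I would write $Z_i:=\log\big(p(X_i)/\widehat{p}^{n,k}_{-i}(X_i)\big)$ and exploit that $\widehat{p}^{n,k}_{-i}$ depends only on $X^{n,k}_{-i}$, hence is independent of $X_i$, so that one may first condition on the estimator. The pointwise estimate $|\log a-\log b|\le|a-b|/\min(a,b)$ combined with the two-sided bounds in \eqref{eq:compact_support} gives $Z_i^2\le\underline{\zeta}^{-2}\big(p(X_i)-\widehat{p}^{n,k}_{-i}(X_i)\big)^2$, so that
\[
\Ep{Z_i^2 \mid \widehat{p}^{n,k}_{-i}}\le\frac{1}{\underline{\zeta}^2}\int_{\text{supp}(p)} p(x)\big(p(x)-\widehat{p}^{n,k}_{-i}(x)\big)^2\,dx\le\frac{\overline{\zeta}}{\underline{\zeta}^2}\,\norm{p-\widehat{p}^{n,k}_{-i}}_2^2 ,
\]
whence $\Ep{Z_i^2}\le(\overline{\zeta}/\underline{\zeta}^2)\MISE(p,\widehat{p}^{n,k}_{-i})\le(\overline{\zeta}C_3/\underline{\zeta}^2)(n-k)^{-\beta_1}$ for each $k\le i\le n$. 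I would then bound the variance of the sum crudely, using $\Vp{Y}\le\Ep{Y^2}$ and the Cauchy--Schwarz inequality $(\sum_{i=k}^{n}Z_i)^2\le(n-k+1)\sum_{i=k}^{n}Z_i^2$, to get
\[
\Vp{\sum_{i=k}^{n}Z_i}\le(n-k+1)\sum_{i=k}^{n}\Ep{Z_i^2}\le\frac{\overline{\zeta}C_3}{\underline{\zeta}^2}\cdot\frac{(n-k+1)^2}{(n-k)^{\beta_1}} .
\]
In the regime of interest ($n$ large, with a growing window so that $(n-k+1)/(n-k)\to1$) the right-hand side is at most $(\overline{\zeta}C_3/\underline{\zeta}^2)(n-k+1)^{2-\beta_1}$, which is \eqref{eq:converg_m2_bound} with $C_2=\overline{\zeta}C_3/\underline{\zeta}^2$ and $\beta_2=2-\beta_1$; note that the requirement $1<\beta_2<2$ forces the MISE exponent to satisfy $0<\beta_1<1$, as it should.

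\emph{Main obstacle.} The two divergence/metric comparisons are routine once \eqref{eq:compact_support} is in hand --- indeed, this is precisely where the two-sided boundedness of $p$ and $\widehat{p}^{n,k}_{-i}$ is used --- and the dependence structure among the $Z_i$ turns out to be irrelevant, since the crude second-moment bound on $\sum_i Z_i$ already suffices. The one point that needs care is keeping the exponent in the variance bound at $2-\beta_1$ rather than at $2$: this works only because each $\Ep{Z_i^2}$ already carries the decaying factor $(n-k)^{-\beta_1}$ from the MISE hypothesis, so the naive bound $(n-k+1)^2\sup_i\Ep{Z_i^2}$ is in fact of the right order. A minor bookkeeping point is handling $(n-k)$ versus $(n-k+1)$ so as to land on the stated constant $C_2$.
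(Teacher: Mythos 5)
Your proof is correct and follows essentially the same route as the paper's: the bound $\log t \le t-1$ together with the lower bound $\underline{\zeta}$ to dominate the KL loss by the MISE, and a pointwise comparison of the log-ratio with the density difference plus a crude quadratic bound in the window length for the variance of the sum. The only differences are cosmetic --- the paper uses $(\log s)^2 \le r\,(s-1)^2$ and $\Vp{\sum_{i=k}^n Z_i}\le (n-k+1)^2 \max_i \Vp{Z_i}$ where you use $\abs{\log a-\log b}\le \abs{a-b}/\min(a,b)$ and $\brc{\sum_{i=k}^n Z_i}^2\le (n-k+1)\sum_{i=k}^n Z_i^2$ --- and both arguments share the same harmless $(n-k)$-versus-$(n-k+1)$ bookkeeping slack that you already flag.
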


\begin{proof}
Write $\widehat{p}_i(X_i) = \widehat{p}^{n,k}_{-i}(X_i)$. For the first proof, we use the fact that
$\log{s} \leq s-1$.
Thus,
\begin{align*}
    \Ep{\log\frac{p(X_i)}{\widehat{p}_i(X_i)}} &\leq \Ep{\frac{p(X_i)}{\widehat{p}_i(X_i)} - 1}\\
    &= \Ep{\int \frac{p^2(x_i) - p(x_i) \widehat{p}_i(x_i)}{\widehat{p}_i(x_i)} d x_i}\\
    &\stackrel{(*)}{=} \Ep{\int \frac{p^2(x_i) - 2 p(x_i) \widehat{p}_i(x_i) + \widehat{p}_i^2(x_i)}{\widehat{p}_i(x_i)} d x_i}\\
    &\leq \frac{1}{\underline{\zeta}} \MISE(p, \widehat{p}^{n,k}_{-i})
\end{align*}
which shows \eqref{eq:converg_m1_upper_bound}, where $(*)$ follows because both $p$ and $\widehat{p}_i$ are densities given $x_{-i}$.
For the variance, first note that $(\log{s})^2 \leq r (s-1)^2$ on $s \geq \underline{\zeta} / \overline{\zeta}$ if $r := (\ln (\underline{\zeta} / \overline{\zeta}))^2 / ((\underline{\zeta} / \overline{\zeta})-1)^2$. Thus,
\begin{align*}
    \Vp{\log\frac{p(X_i)}{\widehat{p}_i(X_i)}} &\leq \Ep{r^2\left(\frac{p(X_i)}{\widehat{p}_i(X_i)} - 1\right)^2}\\
    &= r^2 \Ep{\int \frac{\left(p(x_i) - \widehat{p}_i(x_i)\right)^2}{\widehat{p}^2_i(x_i)} p(x_i) d x_i}\\
    &\leq \frac{\overline{\zeta} r^2}{\underline{\zeta}^2} \MISE (p, \widehat{p}^{n,k}_{-i})
\end{align*}
Therefore,
\begin{align*}
    \Vp{\sum_{i=k}^n \log\frac{p(X_i)}{\widehat{p}_i(X_i)}} &\leq (n-k+1)^2 \Vp{\log\frac{p(X_i)}{\widehat{p}_i(X_i)}}\\
    &\leq \frac{\overline{\zeta} r^2 C_3 }{\underline{\zeta}^2} (n-k+1)^{2-\beta_1}
\end{align*}
and thus $\beta_2 = 2-\beta_1$.
The proof is now complete. \qedhere
\end{proof}

An example of a LOO estimator that satisfies \eqref{eq:converg_m1_upper_bound} and \eqref{eq:converg_m2_bound}  (under condition \eqref{eq:compact_support}) is the LOO \emph{kernel} density estimator (LOO-KDE), defined as
\begin{equation}
\label{def:kde}
    \widehat{p}^{n,k}_{-i}(x_i) = \frac{1}{(n-k)h} \sum_{\substack{j=k \\ j \neq i}}^n K\left(\frac{x_i-x_j}{h}\right)
\end{equation}
where $K(\cdot) \geq 0$ is a kernel function and $h > 0$ is a smoothing parameter.

The KL loss for kernel density estimators is analyzed carefully in \cite{de-with-kl-loss}, where it is shown that the rate of convergence in KL loss is slower than that of MISE for most well-behaved densities. Nevertheless, this rate indeed converges to zero with a polynomial decay rate with the use of appropriate kernel functions,  and thus \eqref{eq:converg_m1_upper_bound} is satisfied. Furthermore, using \eqref{eq:compact_support}, it can be shown that \eqref{eq:converg_m2_bound} is also satisfied. We note that the actual choices of $\beta_1$ and $\beta_2$ do not affect the first-order asymptotic optimality result given in Thm~\ref{thm:opt}.

\section{QCD with LOO Density Estimation}
\label{sec:qcd_loo}

Let $X_1,X_2,\dots,X_n,\dots \in \R^d$ be a sequence of independent random variables (or vectors), and let $\nu$ be a change-point. Assume that $X_1, \dots, X_{\nu-1}$ all have density $p_0$ with respect to some measure $\mu$. Furthermore, assume that $X_\nu, X_{\nu+1}, \dots$ have densities $p_1$ also with respect to $\mu$. Here $p_0$ is assumed to be completely known. While $p_1$ is completely unknown, we assume that \eqref{eq:converg_m1_upper_bound} and \eqref{eq:converg_m2_bound} are satisfied for LOO estimators of $p_1$.

Let $\mathbb{P}_\nu$ denote the probability measure on the entire sequence of observations when the change-point is $\nu$, and let $\E{\nu}{\cdot}$ denote the corresponding expectation.
The change-time $\nu$ is assumed to be unknown but deterministic. The problem is to detect the change quickly, while controlling the false alarm rate. Let $\tau$ be a stopping time \cite{moulin-veeravalli-2018} defined on the observation sequence associated with the detection rule, i.e. $\tau$ is the time at which we stop taking observations and declare that the change has occurred.

\subsection{Classical Results}

When $p_1$ is known, Lorden \cite{lorden1971} proposed solving the following optimization problem to find the best stopping time $\tau$:
\begin{equation}
\label{prob_def}
    \inf_{\tau \in \mathcal{C}_\alpha} \WADD{\tau}
\end{equation}
where
\begin{equation}
    \WADD{\tau} := \sup_{\nu \geq 1} \esssup \E{\nu}{\left(\tau-\nu+1\right)^+|{\cal F}_{\nu-1}}
\end{equation}
characterizes the worst-case delay, and ${\cal F}_n$ denotes the sigma algebra generated by $X_1,\dots,X_n$, i.e., ${\cal F}_n = \sigma(X_1,\dots,X_n)$. The constraint set in  \eqref{prob_def} is
\begin{equation}
\label{fa_constraint}
    \mathcal{C}_\alpha := \left\{ \tau: \FAR{\tau} \leq \alpha \right\}
\end{equation}
with $\FAR{\tau} := \frac{1}{ \E{\infty}{\tau}}$
which guarantees that the false alarm rate of the algorithm does not exceed $\alpha$. Here, $\E{\infty}{\cdot}$ is the expectation operator when the change never happens, and $(\cdot)^+:=\max\{0,\cdot\}$.

Lorden also showed that Page's Cumulative Sum (CuSum) algorithm \cite{page1954} whose test statistic is given by:
\begin{equation*}
    W(n) = \max_{1\leq k \leq n+1} \sum_{i=k}^n \log \frac{p_1(X_i)}{p_0(X_i)} 
\end{equation*}
solves the problem in \eqref{prob_def} asymptotically as $\alpha \to 0$.
The CuSum stopping rule is given by:
\begin{equation}
\label{def:cusum}
    \tau_{\text{Page}}\left(b\right) := \inf \{n:W(n)\geq b \}
\end{equation}
where the threshold is set as $b = \abs{\log \alpha}$. It was shown by Moustakides \cite{moustakides1986} that the CuSum algorithm is exactly optimal for the problem in (\ref{prob_def}).
The asymptotic performance is
\begin{equation}
    \inf_{\tau \in \mathcal{C}_\alpha} \WADD{\tau} \sim \WADD{\tau_{\text{Page}}\left(\abs{\log\alpha}\right)} \sim \frac{\abs{\log \alpha}}{\KL{p_1}{p_0}}
\end{equation}
as $\alpha \to 0$.
Here 
$Y_\alpha\sim G_\alpha$ is equivalent to $Y_\alpha = G_\alpha (1+o(1))$. Also, we use the notation $o(1)$ to denote a quantity that goes to $0$, as $\alpha \to 0$ or $b \to \infty$.

When the post-change distribution has parametric uncertainties, Lai \cite{lai1998} generalized this performance guarantee with the following assumptions. Suppose that $p_0$ and $p_1$ satisfy
\begin{equation}
\label{eq:lai_upper}
    \sup_{\nu \geq 1} \Prob{\nu}{\max_{t \leq n} \sum_{i=\nu}^{\nu+t} Z_i \geq (1+\delta) n I} \xrightarrow{n \to \infty} 0
\end{equation}
for any $\delta > 0$, and
\begin{equation}
\label{eq:lai_lower}
    \sup_{t \geq \nu \geq 1} \Prob{\nu}{\sum_{i=t}^{t+n} Z_i \leq (1-\delta) n I} \xrightarrow{n \to \infty} 0
\end{equation}
for any $\delta \in (0,1)$, with some constant $I > 0$. Also, suppose that the window size $m_\alpha$ satisfies
\begin{equation*}
\label{eq:lai_malpha}
    \liminf m_\alpha / \abs{\log\alpha} > I^{-1}\quad \text{ and } \log m_\alpha = o(\abs{\log\alpha}).
\end{equation*}
Then, the window-limited GLR CuSum test:
\begin{equation}
    \td{\tau}_{GLR}\left(b\right) := \inf \left\{n:\max_{n-m_\alpha \leq k \leq {n+1}} \sup_{\theta \in \Theta} \sum_{i=k}^n Z^{\theta}_{i,k} \geq b \right\}
\end{equation}
with test threshold $b_\alpha = \abs{\log\alpha}$ solves the problem in \eqref{prob_def} asymptotically as $\alpha \to 0$. The asymptotic performance is
\begin{equation}
\label{eq:lai_perf}
    \inf_{\tau \in \mathcal{C}_\alpha} \WADDth{\tau} \sim \WADDth{\td{\tau}_{\text{Page}} \left(b_\alpha\right)} \sim \frac{\abs{\log \alpha}}{I}.
\end{equation}
Note that $I = \KL{p_1}{p_0}$ when $p_0$ and $p_1$ are independent.

\subsection{Leave-one-out CuSum Test}

For the case when $p_1$ is unknown, we define the leave-one-out log-likelihood ratio as
\begin{equation}
\label{eq:est_llr}
    \widehat{Z}^{n,k}_i = \log \frac{\widehat{p}^{n,k}_{-i}(X_i)}{p_0(X_i)},\ \forall k \leq i \leq n.
\end{equation}
and the leave-one-out (LOO) CuSum stopping rule as
\begin{equation}
\label{def:loo_cusum}
    \widehat{\tau}(b) := \inf \left\{n:\max_{(n-m_\alpha)^+ \leq k \leq n-1} \sum_{i=k}^n \widehat{Z}^{n,k}_i \geq b \right\}.
\end{equation}
Here the window size $m_\alpha$ is designed to satisfy
\begin{equation}
\label{eq:malpha}
    \liminf m_\alpha / \abs{\log\alpha} > f I^{-1}~\text{with}~ \log m_\alpha = o(\abs{\log\alpha})
\end{equation}
where $f > 1$ is a constant.

In Lemma~\ref{lem:fa}, we show that $\widehat{\tau}$ with a properly chosen threshold $b_\alpha$ satisfies the false alarm constraint in \eqref{fa_constraint} asymptotically. In Lemma~\ref{lem:delay}, we establish an asymptotic upper bound on $\WADD{\widehat{\tau}(b_\alpha)}$. Finally, in Theorem~\ref{thm:opt}, we combine the two lemmas and establish the first-order asymptotic optimality of $\widehat{\tau}(b_\alpha)$. 


\begin{lemma}
\label{lem:fa}
Suppose that $b_\alpha$ satisfies
\begin{equation}
\label{eq:nonparam_b_alpha}
    b_\alpha = \abs{\log\alpha} + \log(8 m_\alpha).
\end{equation}
Then,
\begin{equation*}
    \E{\infty}{\widehat{\tau}(b_\alpha)} \geq \alpha^{-1}.
\end{equation*}
\end{lemma}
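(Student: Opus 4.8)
The plan is to bound the probability of a false alarm within an arbitrary finite horizon $N$ and then convert this into the claimed lower bound on $\E{\infty}{\widehat\tau(b_\alpha)}$. Write $T=\widehat\tau(b_\alpha)$ and, for a window $[k,n]$, set $S_{n,k}:=\sum_{i=k}^{n}\widehat Z^{n,k}_i$. Since $\{T\le N\}=\bigcup_{n=1}^{N}\big\{\max_{(n-m_\alpha)^+\le k\le n-1}S_{n,k}\ge b_\alpha\big\}$, a union bound over $n$ and over the at most $m_\alpha$ admissible windows ending at each $n$ gives
\[
\Prob{\infty}{T\le N}\ \le\ \sum_{n=1}^{N}\ \sum_{k=(n-m_\alpha)^+}^{n-1}\Prob{\infty}{S_{n,k}\ge b_\alpha}\ \le\ N\,m_\alpha\,\sup_{n,k}\Prob{\infty}{S_{n,k}\ge b_\alpha},
\]
and the factor $m_\alpha$ here is precisely what the $\log(8m_\alpha)$ term in \eqref{eq:nonparam_b_alpha} is designed to absorb.

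The crux is therefore the per-window estimate $\Prob{\infty}{S_{n,k}\ge b_\alpha}\le c\,e^{-b_\alpha}$ for an absolute constant $c$. The enabling fact is that $\widehat p^{n,k}_{-i}$ is a genuine probability density and is independent of $X_i$ under $\Prob{\infty}$, so for each $i$
\[
\E{\infty}{e^{\widehat Z^{n,k}_i}\ \middle|\ X^{n,k}_{-i}}\ =\ \int \frac{\widehat p^{n,k}_{-i}(x)}{p_0(x)}\,p_0(x)\,dx\ =\ \int \widehat p^{n,k}_{-i}(x)\,dx\ =\ 1 ;
\]
this is the data-driven counterpart of the mean-one likelihood-ratio martingale increment underlying the analyses of \cite{lorden1971,lai1998}. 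Using it together with the uniform density bounds \eqref{eq:compact_support} and the variance/smoothness control of \eqref{eq:converg_m1_upper_bound}--\eqref{eq:converg_m2_bound} (now applied with $p_0$ in place of $p$), I would bound the exponential moment $\E{\infty}{e^{S_{n,k}}}$ --- equivalently, the right tail of $S_{n,k}$ --- and then close the argument with a Chernoff/change-of-measure step, $\Prob{\infty}{S_{n,k}\ge b_\alpha}\le e^{-b_\alpha}\,\E{\infty}{e^{S_{n,k}}}$.

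Granting the per-window estimate, the passage to $\E{\infty}{T}\ge\alpha^{-1}$ is routine: since $c\,e^{-b_\alpha}=c\alpha/(8m_\alpha)$, the display above gives $\Prob{\infty}{T\le N}\le cN\alpha/8$ for every $N$, so
\[
\E{\infty}{T}=\sum_{N\ge 0}\Prob{\infty}{T>N}\ \ge\ \sum_{N\ge 0}\Big(1-\tfrac{cN\alpha}{8}\Big)^{+}\ \ge\ \frac{4}{c\alpha}\ \ge\ \alpha^{-1}
\]
as soon as $c\le 4$, and the constant $8$ in \eqref{eq:nonparam_b_alpha} is chosen to guarantee exactly this.

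The main obstacle is the per-window exponential-moment bound. In the window-limited GLR analysis of \cite{lai1998}, the log-likelihood-ratio increments for a fixed post-change parameter are i.i.d.\ under $\Prob{\infty}$ and the exponential of their partial sum is \emph{exactly} a mean-one martingale, so the corresponding per-window probability is at most $e^{-b}$ with no further work. Here the leave-one-out statistics $\widehat Z^{n,k}_k,\dots,\widehat Z^{n,k}_n$ are mutually dependent, since every $\widehat p^{n,k}_{-i}$ is built from (essentially) the same block of samples; consequently $\exp(S_{n,k})$ is \emph{not} a product of independent mean-one factors and $\E{\infty}{e^{S_{n,k}}}$ need not equal $1$. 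Quantifying this dependence --- using \eqref{eq:compact_support} to keep each increment bounded and \eqref{eq:converg_m2_bound} to keep $\Var{\infty}{S_{n,k}}$ growing slower than the square of the window length, so that $\E{\infty}{e^{S_{n,k}}}$ stays bounded up to the slack already built into $b_\alpha$ --- is the delicate step; the remainder is bookkeeping.
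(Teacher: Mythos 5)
Your overall architecture --- a union bound over times and over the at most $m_\alpha$ window starts, a per-window change-of-measure/Markov bound exploiting $\E{\infty}{\widehat{p}^{n,k}_{-i}(X_i)/p_0(X_i)}=1$, and then an elementary tail-summation to pass to $\E{\infty}{\widehat\tau(b_\alpha)}\ge\alpha^{-1}$ --- is the same strategy as the paper's, and your final conversion step is a perfectly good (indeed more self-contained) substitute for the paper's appeal to \cite[Lemma~2.2(ii)]{tartakovsky_qcd2020}. The problem is that the crux, the per-window estimate $\Prob{\infty}{S_{n,k}\ge b_\alpha}\le c\,e^{-b_\alpha}$ with an absolute constant $c$, is exactly what you leave open, and the route you sketch for it does not close. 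Bounding the joint exponential moment $\E{\infty}{e^{S_{n,k}}}$ via the boundedness condition \eqref{eq:compact_support} together with the variance condition \eqref{eq:converg_m2_bound} cannot yield an $O(1)$ (or even polynomial-in-$m_\alpha$) bound: mean/variance control of $S_{n,k}$ only controls $\log\E{\infty}{e^{S_{n,k}}}$ up to terms of the order of $\Var{\infty}{S_{n,k}}$, which \eqref{eq:converg_m2_bound} allows to grow like $(n-k+1)^{\beta_2}$ with $\beta_2>1$, while the window length $m_\alpha$ itself grows at least linearly in $b_\alpha=\abs{\log\alpha}$ by \eqref{eq:malpha}; such a factor swamps the $e^{-b_\alpha}$ gained in the Chernoff step, and the only slack built into the threshold, $\log(8m_\alpha)$, is logarithmic. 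Two secondary issues: conditions \eqref{eq:converg_m1_upper_bound}--\eqref{eq:converg_m2_bound} are assumed for LOO estimation of the \emph{post-change} density $p_1$, so invoking them under $\mathbb{P}_\infty$ (samples from $p_0$) is an additional assumption the lemma does not make; and your closing inequality requires $c\le 4$, which you cannot certify without the missing bound.

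The paper's proof never needs the joint exponential moment. It introduces the auxiliary stopping time $\tau_k(b)=\inf\{n\ge k:\sum_{i=k}^n\widehat Z^{n,k}_i\ge b\}$ and uses the stopped change-of-measure identity: on the stopping event $\prod_i p_0(X_i)/\widehat{p}^{n,k}_{-i}(X_i)\le e^{-b}$, so the probability of stopping within the window is bounded by $e^{-b}$ times a quantity that is evaluated using only the \emph{individual} mean-one property of the LOO likelihood ratios (the fact you also identified), at the price of an extra factor $m_\alpha$ per window start; that factor is precisely what the $\log(8m_\alpha)$ term in \eqref{eq:nonparam_b_alpha} absorbs. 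This yields $\sup_{\nu\ge1}\Prob{\infty}{\nu\le\widehat\tau(b)<\nu+m_\alpha}\le 2m_\alpha^2e^{-b}$, which \cite[Lemma~2.2(ii)]{tartakovsky_qcd2020} converts into $\E{\infty}{\widehat\tau(b)}\ge e^b/(8m_\alpha)$. To repair your proposal you should replace the exponential-moment step with an argument of this type, which uses the mean-one factors one at a time, rather than attempting to show $\E{\infty}{e^{S_{n,k}}}=O(1)$, which the dependence among the $\widehat Z^{n,k}_i$ within a window genuinely precludes under the stated assumptions.
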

\begin{remark}
If $m_\alpha$ satisfies \eqref{eq:malpha}, then $b_\alpha = \abs{\log\alpha} (1+o(1))$ as $\alpha \to 0$.
\end{remark}

\begin{proof}
Fix $\nu \geq 1$. For all threshold $b > 0$,
\begin{align}
    \Prob{\infty}{\nu \leq \widehat{\tau}(b) < \nu + m_\alpha } &\leq \sum_{k = \nu-m_\alpha}^{\nu+m_\alpha-1} \Prob{\infty}{\sum_{i=k}^n \widehat{Z}^{n,k}_i \geq b,\ \exists n: k \leq n \leq k + m_\alpha } \nonumber\\
    &=: \sum_{k=\nu-m_\alpha}^{\nu+m_\alpha-1} \Prob{\infty}{\tau_k(b) \leq k + m_\alpha}
\end{align}
where we define an auxiliary stopping time $\tau_k(b)$ as
\begin{equation*}
    \tau_k(b) := \inf\left\{n \geq k: \sum_{i=k}^n \widehat{Z}^{n,k}_i \geq b \right\}.
\end{equation*}
Now, for any $b > 0$,
\begin{align}
    \Prob{\infty}{\tau_k(b) \leq k + m_\alpha} &= \int_{\{\tau_k(b) \leq k + m_\alpha\}} d \mathbb{P}_\infty \nonumber\\
    &= \int_{\{\tau_k(b) \leq k + m_\alpha\}} \prod_{i=k}^{\tau_k(b)} \frac{\widehat{p}^{n,k}_{-i}(x_i)}{p_0(x_i)} \prod_{i=k}^{\tau_k(b)} \frac{p_0(x_i)}{\widehat{p}^{n,k}_{-i}(x_i)} d \mathbb{P}_\infty \nonumber\\
    &\leq e^{-b} \E{\infty}{\sum_{i=k}^{k+m_\alpha} e^{\widehat{Z}^{n,k}_i}(X_i)} \nonumber\\
    &= e^{-b} m_\alpha
\end{align}
where the second to last line follows by the definition of $\tau_k(b)$, and the last equality follows because
\begin{equation*}
    \E{\infty}{\frac{\widehat{p}^{n,k}_{-i}(X_i)}{p_0(X_i)}}\\
    = \int \underbrace{\brc{\int \frac{\widehat{p}^{n,k}_{-i}(x_i)}{p_0(x_i)} p_0(x_i) d x_i}}_{= 1} p_0(x_{-i}) d x_{-i} = 1.
\end{equation*}
Therefore,
\[ \sup_{\nu \geq 1} \Prob{\infty}{\nu \leq \widehat{\tau}(b) < \nu + m_\alpha } \leq 2 m_\alpha^2 e^{-b}, \]
and by \cite[Lemma~2.2(ii)]{tartakovsky_qcd2020},
\[ \E{\infty}{\widehat{\tau}(b)} \geq \frac{1}{8 m_\alpha} e^b \]
for all $b > 0$. Choosing $b = b_\alpha$ gives the desired result. \qedhere
\end{proof}

\begin{lemma}
\label{lem:delay}
Let $b_\alpha = \abs{\log\alpha}(1+o(1))$ and $m_\alpha$ satisfy \eqref{eq:malpha}. Suppose that \eqref{eq:converg_m1_upper_bound} and \eqref{eq:converg_m2_bound} hold.
Further, suppose \eqref{eq:lai_lower} holds. Then,
\begin{equation*}
    \WADD{\widehat{\tau}(b_\alpha)} \leq \frac{\abs{\log{\alpha}}}{\KL{p_1}{p_0}} (1+o(1))
\end{equation*}
as $\alpha \to 0$.
\end{lemma}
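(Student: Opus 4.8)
The plan is the classical renewal argument for window-limited CuSum tests (cf.\ \cite{lorden1971,lai1998}), modified to absorb the density-estimation error. Write $I := \KL{p_1}{p_0}$ (this is the constant $I$ in \eqref{eq:lai_lower} since the observations are independent; necessarily $I\in(0,\infty)$ for the statement to be meaningful), fix a small $\delta \in (0,f-1)$, and set $N_\alpha := \ceil{(1+\delta)b_\alpha/I}$ and $N'_\alpha := N_\alpha+1$; then $N_\alpha \to \infty$, and by \eqref{eq:malpha} together with $b_\alpha = \abs{\log\alpha}(1+o(1))$ we have $N_\alpha \le m_\alpha$ for all sufficiently small $\alpha$ (here $\delta < f-1$ is used). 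I would first fix $\nu \ge 1$ and bound $\E{\nu}{(\widehat{\tau}(b_\alpha)-\nu+1)^+\mid\mathcal{F}_{\nu-1}}$ uniformly. For $l \ge 1$ set $k_l := \nu+(l-1)N'_\alpha$ and $n_l := \nu+lN'_\alpha-1$, so the windows $\{k_l,\dots,n_l\}$ are pairwise disjoint, lie entirely in the post-change regime, each contain $N'_\alpha$ samples, and satisfy $n_l-k_l=N_\alpha$ and $(n_l-m_\alpha)^+\le k_l\le n_l-1$. Put $A_l := \left\{\sum_{i=k_l}^{n_l}\widehat{Z}^{n_l,k_l}_i < b_\alpha\right\}$; since $k_l$ is admissible at time $n_l$, on $A_l^c$ the test has already stopped by $n_l$, so $\{\widehat{\tau}(b_\alpha)\ge\nu+jN'_\alpha\}\subseteq\bigcap_{l=1}^{j}A_l$ for every $j\ge1$. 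Because the windows are disjoint and post-change, stationarity of the i.i.d.\ post-change sequence and sample-homogeneity of the estimator make the $A_l$ i.i.d.\ and independent of $\mathcal{F}_{\nu-1}$, whence $\Prob{\nu}{\widehat{\tau}(b_\alpha)\ge\nu+jN'_\alpha\mid\mathcal{F}_{\nu-1}}\le q_\alpha^{\,j}$ with $q_\alpha := \Prob{\nu}{A_1}$ (independent of $\nu$). Summing $\E{\nu}{(\widehat{\tau}(b_\alpha)-\nu+1)^+\mid\mathcal{F}_{\nu-1}}=\sum_{n\ge\nu}\Prob{\nu}{\widehat{\tau}(b_\alpha)\ge n\mid\mathcal{F}_{\nu-1}}$ over $n$ in consecutive blocks of $N'_\alpha$ indices (block $j$ contributing at most $N'_\alpha q_\alpha^{\,j-1}$) gives the deterministic bound $N'_\alpha/(1-q_\alpha)$, uniform in $\nu$ and $\mathcal{F}_{\nu-1}$; hence $\WADD{\widehat{\tau}(b_\alpha)}\le N'_\alpha/(1-q_\alpha)$.

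It then remains to show $q_\alpha\to0$ as $\alpha\to0$, which is the technical core. Decompose the block-$1$ statistic as $\sum_{i=\nu}^{\nu+N_\alpha}\widehat{Z}^{n_1,k_1}_i = S_1 + S_2$, where $S_1 := \sum_{i=\nu}^{\nu+N_\alpha}\log\frac{p_1(X_i)}{p_0(X_i)}$ is the genuine log-likelihood-ratio sum and $S_2 := \sum_{i=\nu}^{\nu+N_\alpha}\log\frac{\widehat{p}^{n_1,k_1}_{-i}(X_i)}{p_1(X_i)}$ is the estimation-error term. For $S_1$: the summands are i.i.d.\ with mean $I$, so \eqref{eq:lai_lower} (with $t=\nu$, $n=N_\alpha$) gives $\Prob{\nu}{S_1\le(1-\delta')N_\alpha I}\to0$ for every $\delta'\in(0,1)$. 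For $S_2$: since $\widehat{p}^{n_1,k_1}_{-i}$ is independent of $X_i$, conditioning on the leave-one-out sample gives $\E{\nu}{-S_2}=\sum_{i=\nu}^{\nu+N_\alpha}\E{\nu}{\KL{p_1}{\widehat{p}^{n_1,k_1}_{-i}}}\le N'_\alpha\,C_1/N_\alpha^{\beta_1}=o(N_\alpha)$ by \eqref{eq:converg_m1_upper_bound} (using $\beta_1>0$), while $\Var{\nu}{S_2}\le C_2(N'_\alpha)^{\beta_2}=o(N_\alpha^2)$ by \eqref{eq:converg_m2_bound} (using $\beta_2<2$); Chebyshev's inequality then yields $\Prob{\nu}{S_2\le-\eta N_\alpha I}\to0$ for every $\eta>0$. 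On the complement of these two rare events, $S_1+S_2>(1-\delta'-\eta)N_\alpha I\ge(1-\delta'-\eta)(1+\delta)b_\alpha$ (using $N_\alpha I\ge(1+\delta)b_\alpha$), which exceeds $b_\alpha$ once $\delta',\eta$ are chosen so small that $(1-\delta'-\eta)(1+\delta)>1$ -- possible since $\delta>0$. Hence $q_\alpha\le\Prob{\nu}{S_1\le(1-\delta')N_\alpha I}+\Prob{\nu}{S_2\le-\eta N_\alpha I}\to0$.

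Combining the two steps, $\WADD{\widehat{\tau}(b_\alpha)}\le N'_\alpha(1+o(1))=\frac{(1+\delta)\abs{\log\alpha}}{I}(1+o(1))$, and since $\delta\in(0,f-1)$ is arbitrary, $\limsup_{\alpha\to0}\WADD{\widehat{\tau}(b_\alpha)}/\brc{\abs{\log\alpha}/I}\le1$, which is exactly $\WADD{\widehat{\tau}(b_\alpha)}\le\frac{\abs{\log\alpha}}{\KL{p_1}{p_0}}(1+o(1))$.

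The hard part is the control of the cumulative estimation error $S_2$ over a window of length $\Theta(\abs{\log\alpha})$: one must show simultaneously that its mean is $o(\abs{\log\alpha})$ and that its fluctuations are $o(\abs{\log\alpha})$ with probability tending to one, and this is precisely where the two quantitative convergence hypotheses enter -- the per-sample KL bound \eqref{eq:converg_m1_upper_bound} with rate $\beta_1>0$ controls the mean, and the total-variance bound \eqref{eq:converg_m2_bound} with exponent $\beta_2<2$ controls the fluctuations via Chebyshev. The remaining ingredients -- the disjoint-block renewal bound, the reduction of $\WADD{\cdot}$ to a single $\nu$, and the final $\delta\downarrow0$ limit -- are routine and parallel \cite{lai1998}.
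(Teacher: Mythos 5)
Your proof is correct, and its skeleton is the same as the paper's: partition the post-change horizon into disjoint windows of length $\Theta(b_\alpha/I)$ that fit inside $m_\alpha$, bound the probability that a single window fails to cross $b_\alpha$ by splitting the LOO statistic into the true log-likelihood-ratio sum (handled by \eqref{eq:lai_lower}) plus the cumulative estimation error (mean controlled by \eqref{eq:converg_m1_upper_bound}, fluctuations by \eqref{eq:converg_m2_bound} via Chebyshev), and then sum a geometric series to bound the conditional delay uniformly in $\nu$ and ${\cal F}_{\nu-1}$. The one genuine difference is how the slack is removed: you fix $\delta\in(0,f-1)$, take windows of length $\approx(1+\delta)b_\alpha/I$, show only that the per-window failure probability $q_\alpha\to 0$, and recover the first-order constant by letting $\delta\downarrow 0$ after the limit in $\alpha$; the paper instead builds the slack into a vanishing sequence $\delta_b\searrow 0$ inside $n_b=\lfloor b/(I(1-\delta_b))\rfloor$, which forces it to choose $\eps_b$ and $\delta_b$ with explicit rates and to split into the cases $4\beta_1+\beta_2>2$ and $\leq 2$ so that the failure probability is quantified as $2\delta_b^2$. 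Your route is cleaner and avoids that case analysis entirely (it only needs $\beta_1>0$ and $\beta_2<2$ qualitatively, exactly as the paper's remark anticipates), while the paper's version yields an explicit rate for the $o(1)$ term; both are valid, and your block bookkeeping (admissibility of $k_l$ in the window, independence of the blocks from ${\cal F}_{\nu-1}$, the geometric tail, and $N_\alpha\leq m_\alpha$ from $\delta<f-1$) is handled correctly.
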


\begin{proof}
Write $I = \KL{p_1}{p_0}$. Let $\delta_0 := 1-f^{-1}$ and $\delta_b \in (0,\delta_0)$ be a function of $b$ such that $\delta_b \searrow 0$ as $b \nearrow \infty$. Define
\begin{equation}
\label{eq:delay_nb}
    n_b := \floor{\frac{b}{I(1-\delta_b)}}.
\end{equation}
By definition of the window size $m_\alpha$ in \eqref{eq:malpha}, if $b_\alpha = \abs{\log\alpha}(1+o(1))$, then
\begin{equation*}
    n_b < \frac{b}{I(1-\delta_0)} = \frac{f b}{I} \leq m_\alpha
\end{equation*}
for all sufficiently small $\alpha$. Suppose for now that we can get a large enough $b$ to satisfy
\begin{equation}
\label{eq:delay_main}
    \sup_{t \geq \nu \geq 1} \esssup \Prob{\nu}{\left.\sum_{i=t}^{t+n_b-1} \widehat{Z}^{t+n_b-1,t}_i < b\right| {\cal F}_{t-1}} < 2 \delta_b^2.
\end{equation}
Then, $\forall \nu \geq 1$, $\forall k \geq 1$, and for all large $b$,
\begin{align}
    &\esssup \Prob{\nu}{\widehat{\tau}(b)-\nu+1 > k n_b | \widehat{\tau}(b)-\nu+1 > (k-1) n_b, {\cal F}_{\nu-1} } \nonumber\\
    &\stackrel{(i)}{\leq} \esssup \Prob{\nu}{\left.\widehat{\tau}(b)-\nu+1 > k n_b \right| {\cal F}_{\nu+(k-1)n_b-1} } \nonumber\\
    &\stackrel{(ii)}{\leq} \esssup \Prob{\nu}{\left.\sum_{i=\nu+(k-1)n_b}^{\nu+k n_b - 1} \widehat{Z}^{\nu+k n_b-1, \nu+(k-1)n_b}_i < b \right| {\cal F}_{\nu+(k-1)n_b-1} } \nonumber\\
    &\stackrel{(iii)}{\leq} 2 \delta_b^2
\end{align}
In the series of inequalities above, $(i)$ is by definition of essential supremum, $(ii)$ is explained below, and $(iii)$ follows from \eqref{eq:delay_main}. For $(ii)$, consider the complement of the two events. Since $n_b \leq m_\alpha$, we have
\begin{equation*}
    \nu+(k-1)n_b \geq \nu+k n_b - 1 - m_\alpha
\end{equation*}
which implies that $\nu+(k-1)n_b$ is still one of the hypothetical change-points. If
\[ \sum_{i=\nu+(k-1)n_b}^{\nu+k n_b - 1} \widehat{Z}^{\nu+k n_b-1, \nu+(k-1)n_b}_i \geq b,\]
then $\widehat{\tau}(b)-\nu+1 = k n_b$. Thus, from recursion,
\[ \esssup \Prob{\nu}{\widehat{\tau}(b)-\nu+1 > k n_b | {\cal F}_{\nu-1} } \leq (2 \delta_b^{2})^k.\]
Therefore, for sufficiently large $b$,
\begin{align*}
    &\sup_{\nu \geq 1} \esssup \E{\nu}{n_b^{-1} (\widehat{\tau}(b)-\nu+1)^+|{\cal F}_{\nu-1}} \nonumber\\
    &\leq \sum_{k=1}^\infty \Prob{\nu}{n_b^{-1} (\widehat{\tau}(b)-\nu+1)^+ > k | {\cal F}_{\nu-1}} \nonumber\\
    &\leq \sum_{k=0}^\infty (2 \delta_b^{2})^k = \frac{1}{1-2 \delta_b^{2}}
\end{align*}
and, as $b \to \infty$,
\begin{equation*}
    \WADD{\widehat{\tau}(b)} \leq \frac{n_b}{1-2 \delta_b^2} \leq \frac{b}{I (1-\delta_b) (1-2 \delta_b^2)} = \frac{b}{I} (1+o(1)).
\end{equation*}

It remains to show \eqref{eq:delay_main}. Note that ${\cal F}_{t-1}$ can be dropped by independence between ${\cal F}_{t-1}$ and $\widehat{Z}^{t+n_b-1,t}_i$. For any $t \geq \nu \geq 1$ and $\eps > 0$,
\begin{align}
\label{eq:delay_main2}
    &\Prob{\nu}{\sum_{i=t}^{t+n_b-1} \widehat{Z}^{t+n_b-1,t}_i < b} \nonumber\\
    &= \Prob{\nu}{\sum_{i=t}^{t+n_b-1} \widehat{Z}^{t+n_b-1,t}_i < b, \sum_{i=t}^{t+n_b-1} Z_i - \widehat{Z}^{t+n_b-1,t}_i \leq \eps} + \nonumber\\
    &\qquad \Prob{\nu}{\sum_{i=t}^{t+n_b-1} \widehat{Z}^{t+n_b-1,t}_i < b, \sum_{i=t}^{t+n_b-1} Z_i - \widehat{Z}^{t+n_b-1,t}_i \geq \eps} \nonumber\\
    &\leq \Prob{\nu}{\sum_{i=t}^{t+n_b-1} Z_i \leq b + \eps} + \Prob{\nu}{\frac{1}{n_b} \sum_{i=t}^{t+n_b-1} \left(Z_i - \widehat{Z}^{t+n_b-1,t}_i\right) \geq \frac{\eps}{n_b}} \nonumber\\
    &= \Prob{1}{\sum_{i=1}^{n_b} Z_i \leq b + \eps} + \Prob{1}{\frac{1}{n_b} \sum_{i=1}^{n_b} \left(Z_i - \widehat{Z}^{n_b,1}_i\right) \geq \frac{\eps}{n_b}}
\end{align}
Now, the first term grows larger with large $\eps$, while the second term grows smaller. It is important to choose a proper $\eps = \eps_b$ in order to keep both terms small. The idea in the following is that, we choose $\eps_b$ by controlling the second term, and then verify that it is small enough also for the first for large $b$.

For the second term in \eqref{eq:delay_main2}, write $\widehat{p}_i$ and $\widehat{Z}_i$ as short-hand notations for $\widehat{p}^{n_b,1}_{-i}$ and $\widehat{Z}^{n_b,1}_{-i}$, respectively. Note that $\E{1}{Z_i - \widehat{Z}_i} = \E{1}{\KL{p_1}{\widehat{p}_i}}$. Under the conditions for the estimator in \eqref{eq:converg_m1_upper_bound} and \eqref{eq:converg_m2_bound}, the mean and variance of $n_b^{-1} \sum_{i=1}^{n_b} (Z_i - \widehat{Z}_i)$ can be bounded as
\begin{align*}
    \E{1}{\frac{1}{n_b} \sum_{i=1}^{n_b} \log\frac{p_1(X_i)}{\widehat{p}_i(X_i)}} = \E{1}{\KL{p_1}{\widehat{p}_i}} \leq \frac{C_1}{n_b^{\beta_1}} \\
    \Var{1}{\frac{1}{n_b} \sum_{i=1}^{n_b} \log\frac{p_1(X_i)}{\widehat{p}_i(X_i)}} \leq \frac{C_2}{n_b^{2-\beta_2}}.
\end{align*}
Using Chebyhsev's inequality,
\begin{align}
\label{eq:delay_term2}
    &\Prob{1}{\frac{1}{n_b} \sum_{i=1}^{n_b} \left(Z_i - \widehat{Z}_i\right) \geq \frac{\eps}{n_b}} \nonumber\\
    &\leq \Prob{1}{\abs{\frac{1}{n_b} \sum_{i=1}^{n_b} \left(Z_i - \widehat{Z}_i\right) - \E{1}{\KL{p_1}{\widehat{p}_i}}} \geq \frac{\eps}{n_b} - \E{1}{\KL{p_1}{\widehat{p}_i}}} \nonumber\\
    &\leq \Var{1}{\frac{1}{n_b} \sum_{i=1}^{n_b} \log\frac{p_1(X_i)}{\widehat{p}_i(X_i)}} \left(\frac{\eps}{n_b} - \E{1}{\KL{p_1}{\widehat{p}_i}}\right)^{-2} \nonumber\\
    &\leq \frac{C_2}{n_b^{2-\beta_2}} \left(\frac{\eps}{n_b} - \E{1}{\KL{p_1}{\widehat{p}_i}}\right)^{-2}
\end{align}
for any $\eps > n_b \times \E{1}{\KL{p_1}{\widehat{p}_i}}$.
Now, \eqref{eq:delay_term2} is less than or equal to $\delta_b^2$ if we let
\begin{equation}
\label{eq:delay_eps_eqn}
    \frac{\eps}{n_b} - \E{1}{\KL{p_1}{\widehat{p}_i}} \geq \frac{\sqrt{C_2}}{\delta_b n_b^{\frac{2-\beta_2}{2}}} \iff \eps \geq \frac{\sqrt{C_2} n_b^{\beta_2 / 2}}{\delta_b} + n_b \E{1}{\KL{p_1}{\widehat{p}_i}}.
\end{equation}
Among the two terms above, the second term $\leq C_1 n_b^{1-\beta_1}$, so there are two cases depending on the rate for the first term.
\begin{itemize}
\item Case 1: $4 \beta_1 + \beta_2 > 2$. We want the first term in \eqref{eq:delay_eps_eqn} to be dominant by choosing a proper $\delta_b$. Let
\begin{equation}
\label{eq:delay_choose_eps}
    \eps_b = \frac{2 \sqrt{C_2} n_b^{\beta_2 / 2}}{\delta_b}.
\end{equation}
Then, the first term in \eqref{eq:delay_main2} becomes
\begin{align*}
    & \Prob{1}{\sum_{i=1}^{n_b} Z_i < b + \eps_b} \nonumber\\
    &= \Prob{1}{\sum_{i=1}^{n_b} Z_i < (1-\delta_b) n_b I + \frac{2 \sqrt{C_2} n_b^{\beta_2 / 2}}{\delta_b}}\\
    &= \Prob{1}{\sum_{i=1}^{n_b} Z_i < (1-\delta_b) n_b I \brc{1 + \frac{2 \sqrt{C_2} }{(1-\delta_b)\delta_b n_b^{(2-\beta_2) / 2} I} } } \nonumber\\
    &\leq \Prob{1}{\sum_{i=1}^{n_b} Z_i < (1-\delta_b) n_b I \brc{1 + \frac{2 f \sqrt{C_2}}{\delta_b n_b^{(2-\beta_2) / 2} I } } }
\end{align*}
where the last inequality uses $1-\delta_b > f^{-1}$. Let
\begin{equation}
\label{eq:delay_choose_delta}
    \delta_b = \frac{(4 f^2 C_2)^\frac{1}{4}}{n_b^{(2-\beta_2) / 4} \sqrt{I}}
\end{equation}
which, by \eqref{eq:delay_choose_eps}, also implies that $\eps_b \sim n_b^{\frac{2 + \beta_2}{4}}$.
As $b \to \infty$, since $\beta_2 < 2$ and $n_b \to \infty$, $\delta_b \searrow 0$ and $\eps_b$ increases slower than the linear rate. With this $\delta_b$,
\[ 1 + \frac{2 f \sqrt{C_2}}{\delta_b n_b^{(2-\beta_2) / 2} I } = 1 + \delta_b \]
and
\begin{equation}
\label{eq:delay_true_llr}
    \Prob{1}{\sum_{i=1}^{n_b} Z_i < b + \eps_b} \leq \Prob{1}{\sum_{i=1}^{n_b} Z_i < (1-\delta_b^2) n_b I}.
\end{equation}

We want to verify that the first term in \eqref{eq:delay_eps_eqn} is indeed dominant with the $\delta_b$ defined in \eqref{eq:delay_choose_delta}. Since its rate satisfies $\frac{\sqrt{C_2} n_b^{\beta_2 / 2}}{\delta_b} \sim n_b^{(2+\beta_2)/4}$, it indeed increases faster than $n_b^{1-\beta_1}$ if $4 \beta_1 + \beta_2 > 2$. Thus, \eqref{eq:delay_eps_eqn} is satisfied for large $b$.

\item Case 2: $4 \beta_1 + \beta_2 \leq 2$. We want the two terms to be simultaneously dominated by $n_b^{1-\beta_1}$. Let
\begin{align}
\label{eq:delay_choose_eps_delta_2}
    &\eps_b = 2 C_1 n_b^{1-\beta_1} \nonumber\\
    &\delta_b = \frac{2 f C_1}{I} n_b^{-\beta_1} \iff \frac{f \eps_b}{n_b I} = \delta_b.
\end{align}
The rest of the proof is similar, and in the same way we get \eqref{eq:delay_true_llr}.
To verify the first term in \eqref{eq:delay_eps_eqn} is dominated by $n_b^{1-\beta_1}$ with $\delta_b$ defined in \eqref{eq:delay_choose_eps_delta_2}, note that it satisfies $\frac{\sqrt{C_2} n_b^{\beta_2 / 2}}{\delta_b} \sim n_b^{\beta_1 + \beta_2/2}$, which increases no faster than $\eps_b$ if $4 \beta_1 + \beta_2 \leq 2$. When equality is achieved, redefine $\eps_b = \brc{C_1 + \frac{I \sqrt{C_2}}{2 f C_1}} n_b^{1-\beta_1}$ and one still gets the inequality in \eqref{eq:delay_eps_eqn}.
\end{itemize}

Also, in either of the cases, we get
\begin{equation*}
    \Prob{1}{\sum_{i=1}^{n_b} Z_i < b + \eps_b} \leq \Prob{1}{\sum_{i=1}^{n_b} Z_i < (1-\delta_b^2) n_b I} \leq \delta_b^2.
\end{equation*}
which follows by \cite[Appendix~B]{lai1998}, assuming that \eqref{eq:lai_lower} is true for $Z_i$'s. Thus,
\begin{align*}
    &\sup_{\nu \geq 1} \Prob{\nu}{\sum_{i=t}^{t+n_b-1} \widehat{Z}^{t+n_b-1,t}_i < b}\\
    &\leq \Prob{1}{\sum_{i=1}^{n_b} Z_i \leq b + \eps_b} + \Prob{1}{\frac{1}{n_b} \sum_{i=1}^{n_b} \left(Z_i - \widehat{Z}^{n_b,1}_i\right) \geq \frac{\eps_b}{n_b}}\\
    &\leq 2 \delta_b^2. \qedhere
\end{align*}
\end{proof}


\begin{theorem}
\label{thm:opt}
Suppose that $b_\alpha$ is chosen as in \eqref{eq:nonparam_b_alpha}, with a window size $m_\alpha$ large enough to satisfy \eqref{eq:malpha}, and suppose that \eqref{eq:lai_upper}, \eqref{eq:lai_lower} hold for the true log-likelihood ratio. Then $\widehat{\tau}(b_\alpha)$ with $\widehat{\tau}$ defined in \eqref{def:loo_cusum} solves the problem in \eqref{prob_def} asymptotically as $\alpha \to 0$. The worst case delay is
\begin{align*}
    \inf_{\tau \in \mathcal{C}_\alpha} \WADD{\tau} \sim \WADD{\widehat{\tau} \left(b_\alpha\right)} \sim \frac{\abs{\log \alpha}}{\KL{p_1}{p_0}}
\end{align*}
as $\alpha \to 0$.
\end{theorem}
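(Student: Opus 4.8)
The plan is to prove the theorem by a sandwich argument: show that $\widehat{\tau}(b_\alpha)$ is feasible for the constraint set $\mathcal{C}_\alpha$, that its worst-case delay is bounded above by $|\log\alpha|/\KL{p_1}{p_0}$ to first order, and that no feasible test can do asymptotically better. The two pieces on the achievability side are exactly Lemmas~\ref{lem:fa} and~\ref{lem:delay}; the converse is the classical asymptotic lower bound for Lorden's formulation. Concretely, I would first note that with $b_\alpha$ chosen as in \eqref{eq:nonparam_b_alpha}, Lemma~\ref{lem:fa} gives $\E{\infty}{\widehat{\tau}(b_\alpha)} \geq \alpha^{-1}$, i.e. $\FAR{\widehat{\tau}(b_\alpha)} \leq \alpha$, so $\widehat{\tau}(b_\alpha) \in \mathcal{C}_\alpha$. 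Since $m_\alpha$ satisfies \eqref{eq:malpha}, the remark following Lemma~\ref{lem:fa} yields $b_\alpha = |\log\alpha|(1+o(1))$, which is precisely the hypothesis required to apply Lemma~\ref{lem:delay}; together with the standing assumptions \eqref{eq:converg_m1_upper_bound}, \eqref{eq:converg_m2_bound} on the LOO estimator of $p_1$ and the assumption \eqref{eq:lai_lower}, that lemma gives
\[ \WADD{\widehat{\tau}(b_\alpha)} \leq \frac{|\log\alpha|}{\KL{p_1}{p_0}}(1+o(1)) \]
as $\alpha \to 0$.

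For the matching lower bound I would invoke the universal converse: for every family of stopping times with $\tau \in \mathcal{C}_\alpha$,
\[ \inf_{\tau \in \mathcal{C}_\alpha} \WADD{\tau} \geq \frac{|\log\alpha|}{\KL{p_1}{p_0}}(1+o(1)), \]
which holds under condition \eqref{eq:lai_upper} following \cite{lai1998} (and, since here $p_1$ is a fixed distribution rather than a parametric family, already follows from \cite{lorden1971,lai1998}). This is where \eqref{eq:lai_upper}, which played no role in Lemma~\ref{lem:delay}, is used. The key observation is that this bound is information-theoretic: it constrains \emph{every} test in $\mathcal{C}_\alpha$, regardless of whether the test has access to $p_1$, so in particular it applies to $\widehat{\tau}(b_\alpha)$.

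Combining the two displays with the feasibility $\widehat{\tau}(b_\alpha) \in \mathcal{C}_\alpha$ gives
\[ \frac{|\log\alpha|}{\KL{p_1}{p_0}}(1+o(1)) \leq \inf_{\tau \in \mathcal{C}_\alpha} \WADD{\tau} \leq \WADD{\widehat{\tau}(b_\alpha)} \leq \frac{|\log\alpha|}{\KL{p_1}{p_0}}(1+o(1)), \]
so all three quantities are asymptotically equivalent to $|\log\alpha|/\KL{p_1}{p_0}$, which is the assertion of the theorem.

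The bulk of the work is already carried out in the two lemmas, so the proof itself is short; the one point deserving care — and essentially the only place the argument could fail — is verifying that the classical converse of \cite{lai1998} applies verbatim in the present setting. Here one should check that the lower-bound argument there uses only the false-alarm constraint and the almost-sure growth rate of the true log-likelihood ratio under $p_1$ (controlled by \eqref{eq:lai_upper}), and never uses that the detection statistic is itself a (generalized) likelihood ratio; this being the case, no estimator-specific conditions are needed on the converse side, and the sandwich closes.
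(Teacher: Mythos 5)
Your proposal is correct and follows essentially the same route as the paper: feasibility from Lemma~\ref{lem:fa}, the delay upper bound from Lemma~\ref{lem:delay} (using the remark that \eqref{eq:malpha} gives $b_\alpha = \abs{\log\alpha}(1+o(1))$), and the universal lower bound from \cite[Thm.~1]{lai1998} applied to all of $\mathcal{C}_\alpha$. Your attribution of the converse to condition \eqref{eq:lai_upper} is in fact the standard (and correct) one; the paper's proof cites \eqref{eq:lai_lower} there, which appears to be a slip, so your version is if anything slightly more careful.
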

\begin{proof}
The asymptotic lower bound on the delay follows from \eqref{eq:lai_lower} by using \cite[Thm.~1]{lai1998}. The asymptotic optimality of $\widehat{\tau}(b_\alpha)$ follows from Lemma~\ref{lem:fa} and Lemma~\ref{lem:delay}.
\end{proof}

\section{Numerical Results}
\label{sec:num_res}

In Fig. \ref{fig:loo_perf}, we study the performance of the proposed LOO-CuSum test defined in \eqref{def:loo_cusum} through Monte Carlo (MC) simulations when the pre-change distribution is ${\cal N}(0,1)$. The LOO-KDE (defined in \eqref{def:kde}) is used to estimate the density. The actual post-change distribution is ${\cal N}(0.5,1)$, but the LOO-CuSum test has no knowledge of it. The performance of the LOO-CuSum test is compared with that of the following tests:
\begin{enumerate}
    \item the CuSum test (in \eqref{def:cusum}), which assumes full knowledge of the post-change distribution;
    \item the WL-GLR-CuSum test, which assumes that the post-change distribution belongs to $\{{\cal N}(\theta,1)\}_{\theta > 0}$.
\end{enumerate}
Different window sizes are considered, among which window sizes of 100 and 200 are sufficiently large to cover the full range of delay.
It is seen that the expected delay of the LOO-CuSum test is close to that of the WL-GLR test for all window sizes considered. The results also validate the first-order asymptotic optimality of the LOO-CuSum test for large enough window size (Thm~\ref{thm:opt}).

\begin{figure}[tbp]
\centerline{\includegraphics[width=.8\textwidth]{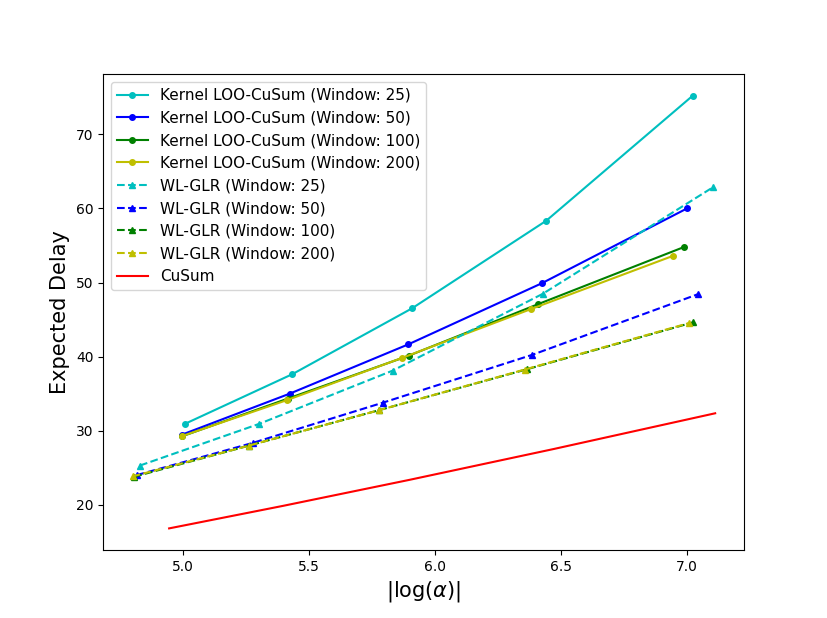}}
\vspace{-3mm}\caption{Comparison of operating characteristics of LOO-CuSum (solid lines) with the CuSum test (in red) and the WL-GLR-CuSum test (dotted lines) in detecting a shift in mean of a Gaussian. The pre- and post-change distributions are ${\cal N}(0,1)$ and ${\cal N}(0.5,1)$. The change-point $\nu = 1$. The kernel width parameter $h=(\min\{n, m\}-1)^{-1/5}$ where $m$ is the window size.
}
\label{fig:loo_perf}
\end{figure}

\section{Conclusion}
\label{sec:concl}

We studied a window-limited LOO-CuSum test for QCD that does not assume any knowledge of the post-change distribution, and does not require post-change training samples. We established the first-order asymptotic optimality of the test, and validated our analysis through numerical results.

\bibliographystyle{IEEEtran}
\bibliography{ref}


\end{document}